\newtheorem*{theorem*}{Theorem}
\newtheorem*{lemma*}{Lemma}
\theoremstyle{plain}
\newtheorem{theorem}{Theorem}[section]
\newtheorem{lemma}[theorem]{Lemma}
\newtheorem{corollary}[theorem]{Corollary}
\theoremstyle{definition}
\newtheorem{example}[theorem]{Example}
\newtheorem{definition}[theorem]{Definition}
\newtheorem{remark}[theorem]{Remark}
\numberwithin{equation}{section}
\newcommand{\abs}[1]{\lvert#1\rvert}
\newcommand{\norm}[1]{\lVert#1\rVert}
\newcommand{\bigabs}[1]{\bigl\lvert#1\bigr\rvert}
\newcommand{\bignorm}[1]{\bigl\lVert#1\bigr\rVert}
\newcommand{\N}{{\mathbb N}}
\newcommand{\E}{{\mathbb E}}
\newcommand{\bP}{{\mathbb P}}
\newcommand{\R}{{\mathbb R}}
\newcommand{\cR}{{\mathcal R}}
\newcommand{\cM}{{\mathcal M}}
\newcommand{\cX}{{\mathcal X}}
\title[Continuity and Consistency]{A note on continuity and asymptotic consistency  of  measures of risk and variability}
\author[N.~Gao]{Niushan Gao}
\address{Department of Mathematics, Toronto Metropolitan University, 350 Victoria Street, Toronto, Canada M5B 2K3}
\email{niushan@torontomu.ca}
\author[F.~Xanthos]{Foivos Xanthos}
\address{Department of Mathematics, Toronto Metropolitan University, 350 Victoria Street, Toronto, Canada M5B 2K3}
\email{foivos@torontomu.ca}
\thanks{The authors acknowledge support of NSERC Discovery Grants.}
\keywords{Automatic continuity, Strong consistency, Risk measures, Variability measures}
\subjclass[2010]{91G70, 91B30, 46E30}
\date{\today}
\begin{document}
\maketitle
\begin{abstract}
In this short note, we show that every   convex, order bounded above functional on a Fr\'echet lattice is automatically norm continuous. This improves a  results in \cite{RS06} and  applies to many deviation and variability measures. We also show that an order-continuous, law-invariant functional on an Orlicz space is strongly consistent everywhere, extending a result in \cite{KSZ14}. 
\end{abstract}

\section{Automatic Continuity}
Since its introduction  in the landmark paper Artzner et al.\ \cite{ArtznerDelbaenEberHeath1999}, the axiomatic theory of risk measures  has been a fruitful area of research.  Among many  topics, one particular direction is to investigate automatic continuity of risk measures.
In general, automatic continuity has   long  been an interesting research topic in mathematics and probably originates from the  fact that a real-valued convex function on an open interval is continuous. This well-known fact was later extended to the following  theorem for real-valued convex functionals on general Banach lattices. 

\begin{theorem*}[Ruszczy\'{n}ski and Shapiro \cite{RS06}] A real-valued, convex, monotone functional on a Banach lattice is norm continuous.
\end{theorem*}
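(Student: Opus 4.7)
The plan is to reduce to continuity at $0$ and then exploit a "rapidly decaying subsequence" trick. Let $\rho$ be a real-valued, convex, monotone functional on a Banach lattice $X$. Continuity at any point $x_0$ reduces to continuity at $0$ via the translated functional $\tilde\rho(x) := \rho(x + x_0)$, which inherits both convexity and monotonicity. A standard subsequence argument further reduces the problem to showing that whenever $(x_n)$ satisfies $\|x_n\| \leq 2^{-n}/n$, one has $\rho(x_n) \to \rho(0)$.

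Setting $y_n := |x_n|$, the rapid decay guarantees that the series $y := \sum_n n\, y_n$ is absolutely convergent in norm and therefore converges to a positive element of $X$. This is the central trick: it uses Banach completeness together with the lattice structure to fabricate a single dominating element enjoying the key inequality $0 \leq n y_n \leq y$ for every $n$.

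Now I combine convexity with monotonicity. The identity $y_n = \tfrac{1}{n}(n y_n) + \bigl(1 - \tfrac{1}{n}\bigr) \cdot 0$ and the monotone bound $\rho(n y_n) \leq \rho(y)$ give
\[ \rho(y_n) \leq \tfrac{1}{n}\rho(y) + \bigl(1-\tfrac{1}{n}\bigr)\rho(0) \longrightarrow \rho(0), \]
while monotonicity yields the matching lower bound $\rho(0) \leq \rho(y_n)$, so $\rho(y_n) \to \rho(0)$. A symmetric computation from the convex combination $0 = \tfrac{n}{n+1}(-y_n) + \tfrac{1}{n+1}(n y_n)$ produces $\rho(-y_n) \to \rho(0)$. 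Since $-y_n \leq x_n \leq y_n$, monotonicity squeezes $\rho(-y_n) \leq \rho(x_n) \leq \rho(y_n)$, so $\rho(x_n) \to \rho(0)$, as desired.

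The principal obstacle is the absence of a universal order unit: in a general Banach lattice one cannot claim that a norm neighborhood of $0$ lies inside some interval $[-u,u]$, on which boundedness of $\rho$ would be automatic from monotonicity and hence continuity would follow from the classical local boundedness principle for convex functions. The amplification $y = \sum_n n y_n$ sidesteps this by manufacturing, for each prescribed null sequence, a bespoke dominating element, thereby replacing the usual Baire category route.
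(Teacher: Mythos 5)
Your proof is correct and rests on the same two ideas as the paper's proof of its generalization (Theorem \ref{theorem1}): passing to a subsequence so rapidly convergent that $\sum_n n\,|x_n|$ converges in norm to a single dominating element $y$ with $n|x_n|\leq y$, and then expressing the point of interest as a convex combination of the base point with a dilated point lying in the order interval $[-y,y]$. The only real divergence is in how the lower bound on $\rho(x_n)$ is obtained: you squeeze $\rho(-y_n)\leq\rho(x_n)\leq\rho(y_n)$, which uses monotonicity on both sides, whereas the paper uses the reflection $2X-X_n$ together with the midpoint identity $X=\tfrac{1}{2}X_n+\tfrac{1}{2}(2X-X_n)$, which needs only an \emph{upper} bound for $\rho$ on the order interval --- precisely the observation that lets the paper relax monotonicity to order boundedness above (and Banach lattices to Fr\'echet lattices). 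One cosmetic point: your inequalities implicitly take $\rho$ increasing; for decreasing $\rho$ one should first pass to $x\mapsto\rho(-x)$, which preserves convexity and is increasing.
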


Recall that a functional $\rho$ on a vector space $\cX$ is said to be \emph{convex} if $\rho(\lambda X+(1-\lambda )Y)\leq \lambda \rho(X)+(1-\lambda )\rho(Y) $ for  any $X,Y\in \cX$ and any $\lambda\in [0,1]$.   Recall also that a \emph{Banach lattice} $\cX$ is a vector lattice with a complete norm such that $\abs{X}\leq \abs{Y}$ in $\cX$ implies $\norm{X}\leq \norm{Y}$ (see  \cite{AB06} for standard terminology and facts regarding vector lattices).  A functional $\rho$ on a Banach lattice $\cX$ is said to be {\em increasing} if $\rho(X)\leq \rho(Y)$ whenever $X\leq Y$ in $\cX$. $\rho$ is said to be  \emph{decreasing} if $-\rho$ is increasing. $\rho$ is said to be \emph{monotone} if it is either increasing or decreasing.

The above celebrated result  of Ruszczy\'{n}ski and Shapiro has drawn extensive attention in optimization, operations research and risk management. We refer the reader to Biagini and Frittelli  \cite{BF}  for a version on Fr\'echet lattices and Farkas, Koch-Medina and Munari  \cite{FarkasKochMunari2014}  for further literature on automatic norm continuity properties.

With law invariance, other types of continuity properties beyond norm continuity can be established. The theorem below is striking. Recall first that a functional $\rho$ is said to be \emph{law invariant} if $\rho(X)=\rho(Y)$ whenever $X$ and $Y$ have the same distribution. Recall also that a functional $\rho$ on a set $\cX$ of random variables is said to have the \emph{Fatou property} if $\rho(X)\leq\liminf_n\rho(X_n)$ whenever $X_n\xrightarrow{o}X$ in $\cX$. Here $X_n\xrightarrow{o}X$ in $\cX$, termed as \emph{order convergence} in $\cX$\footnote{The term order convergence is originated from the theory of vector lattices (see e.g.\ \cite{AB06}). We note here that in our definition we do not require that $\cX$ is a vector lattice.}, is used in the literature to denote  dominated a.s.\ convergence in $\cX$, i.e.,    $X_n\stackrel{a.s.}{\longrightarrow}X$ and there exists $Y\in \cX$ such that $\abs{X_n}\leq Y$ a.s.\ for any $n\in\N$.    The Fatou property is therefore just order lower semicontinuity. 
\begin{theorem*}[Jouini et al.\ \cite{Joui06}]
A real-valued, convex, monotone, law-invariant functional on $L^\infty$ over a non-atomic probability space has the Fatou property. Consequently, it is $\sigma(L^\infty, L^1)$ lower semicontinuous and admits a dual representation via $L^1$.
\end{theorem*}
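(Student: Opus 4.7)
The plan is to prove the Fatou property first and then derive the weak-$\ast$ lower semicontinuity and the dual representation from standard functional-analytic principles. Write $\rho$ for the functional in question. By the Ruszczy\'nski-Shapiro theorem stated above, $\rho$ is automatically $L^\infty$-norm continuous; this takes care of continuity along norm-convergent sequences, and the issue is to pass from almost-sure dominated convergence to the lower-semicontinuity inequality.

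Let $X_n \xrightarrow{o} X$ in $L^\infty$, so $X_n \to X$ almost surely and $\abs{X_n}\le Y$ for some $Y\in L^\infty$. After passing to a subsequence one may assume $\rho(X_n)\to \alpha:=\liminf_n \rho(X_n)$. The natural strategy is to produce convex combinations $Z_n = \sum_{k\ge n} \lambda_k^{(n)} X_k$ (finite sums, weights in $[0,1]$ summing to $1$) that converge to $X$ in $L^\infty$-norm. If such $Z_n$ exist, convexity gives $\rho(Z_n)\le \sup_{k\ge n}\rho(X_k)$, norm continuity gives $\rho(Z_n)\to\rho(X)$, and we conclude $\rho(X)\le \alpha$. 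The obstacle is that convex combinations that converge almost surely are easy to manufacture (e.g.\ via Koml\'os' subsequence theorem or a Mazur-type lemma applied to tail sets), but such combinations need not converge in the $L^\infty$-norm, which is far stronger than almost-sure convergence.

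Law invariance together with the non-atomicity of $(\Omega,\cF,\bP)$ is precisely what overcomes this obstacle, and this step I expect to be the technical heart of the argument. The idea is to replace each $X_n$ by an identically distributed copy $\tilde X_n$, arranged via a rearrangement/coupling on the non-atomic base so that some sequence of convex combinations of the $\tilde X_n$ converges to $X$ in $L^\infty$-norm; since $\rho(\tilde X_n)=\rho(X_n)$ by law invariance, the convexity-plus-norm-continuity argument above then closes. The standard route first uses Koml\'os to obtain almost-sure convergence of Ces\`aro means of a subsequence and then exploits Hardy-Littlewood-P\'olya-type rearrangements, available on a non-atomic probability space, to upgrade almost-sure to norm convergence. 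I anticipate this construction to be the main obstacle.

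Once the Fatou property is established, the remaining two assertions are routine. By a classical theorem (due to Grothendieck, and in this form often attributed to Delbaen), a convex subset of $L^\infty$ is $\sigma(L^\infty,L^1)$-closed if and only if its intersection with every $L^\infty$-ball is closed in probability; applied to the sublevel sets $\{\rho\le c\}$, the Fatou property supplies exactly the required bounded closedness. Hence $\rho$ is $\sigma(L^\infty,L^1)$-lower semicontinuous. The Fenchel-Moreau biconjugate theorem, applied in the dual pair $(L^\infty,L^1)$, then yields the dual representation.
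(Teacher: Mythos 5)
First, a point of order: the paper does not prove this statement. It is quoted as a known theorem of Jouini, Schachermayer and Touzi \cite{Joui06}, so there is no internal proof to compare yours against; I can only judge your argument on its own terms. The outer layers of your proposal are sound: invoking Ruszczy\'nski--Shapiro for automatic norm continuity is legitimate, and, granting the Fatou property, the passage to $\sigma(L^\infty,L^1)$ lower semicontinuity via the Grothendieck/Krein--\v{S}mulian criterion applied to sublevel sets intersected with norm balls, followed by Fenchel--Moreau in the dual pair $(L^\infty,L^1)$, is the standard and correct route.

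The gap is exactly where you predict it, and it is fatal to the proposal as written: the Fatou property itself is not proved. Everything hinges on the assertion that, for a uniformly bounded sequence $X_n\to X$ a.s., one can choose identically distributed copies $\tilde X_n\sim X_n$ and finite convex combinations of their tails converging in $L^\infty$-norm to some $\tilde X\sim X$. That assertion carries essentially the entire content of the theorem, and neither tool you name delivers it. Koml\'os is beside the point: order convergence already gives dominated a.s.\ convergence of the $X_n$ themselves, so the only issue is the upgrade to $\|\cdot\|_\infty$-convergence, and Koml\'os-type theorems only ever produce a.s.\ convergence. Nor do rearrangements obviously help: the comonotone copies $F_{X_n}^{-1}(U)$ need not converge uniformly (for $X_n=\mathbf{1}_{[0,1/n]}$ and $X=0$ the quantile functions remain at sup-distance $1$), and to make convex combinations of ``spiky'' copies uniformly small one must couple all the $\tilde X_n$ simultaneously against a single $\tilde X$ with essentially disjoint exceptional sets --- a delicate construction you do not supply and whose general feasibility is not established by anything you cite. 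For contrast, the actual proof in \cite{Joui06} takes a different route: it starts from the dual representation over finitely additive measures (available by norm continuity and convexity), uses non-atomicity to show that conditioning on finite $\sigma$-algebras does not increase a law-invariant convex functional, and thereby replaces finitely additive dual elements by $L^1$-densities; the Fatou property is then derived from the resulting representation rather than established directly. As it stands, your proposal is a plausible research plan with its central lemma missing, not a proof.
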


This theorem  was recently extended by Chen et al.\ \cite{CGLLa} to general rearrangement-invariant spaces. See \cite[Theorem 2.2., Theorem 4.3, Theorem 4.7]{CGLLa} for details. We also refer the reader to \cite{S13} for further interesting continuity properties  of law-invariant risk measures. 

\medskip
In this section, we aim at extending the above theorem of Ruszczy\'{n}ski and Shapiro on norm continuity of convex functionals.   
Specifically, we show that the monotonicity assumption can be significantly relaxed to the following notion on order boundedness. 
\begin{definition}
Let $\cX$ be a vector lattice. For $U,V\in \cX$ with $U\leq V$, the {\em order interval} $[U,V]$ is defined by $$[U,V]=\{X\in \cX:U\leq X\leq V\}.$$
A functional $\rho:\cX\rightarrow \R$ is said to be {\em order bounded above} if it is bounded above on all order intervals.
\end{definition}

Monotone functionals are easily seen to be order bounded above. While risk measures are usually assumed to be monotone, many important functionals used in  finance, insurance and other disciplines are not necessarily monotone. 

\begin{example}
General deviation measures were introduced in Rockafellar et al.\ \cite{RUZ06} as  functionals   $\rho:L^2 \rightarrow [0,+\infty]$ satisfying subadditivity, positive homogeneity, $\rho(X+c)=\rho(X)$ for every $X\in L^2$ and $c \in \mathbb{R}$, and $\rho(X)>0$ for nonconstant $X$. They are usually not monotone, but may be order bounded above. A specific example is standard deviation and semideviations. Recall that for a random variable $X\in L^2$, its standard deviation, upper and lower semideviations are given by $$\sigma(X)=\norm{X-\E[X]}_{L^2},\;\;\sigma_+(X)=\norm{(X-\E[X])^+}_{L^2},\;\;\sigma_-(X)=\norm{(X-\E[X])^-}_{L^2}, $$respectively. They are well known to be convex. It is also easy to see that they are neither increasing nor decreasing. We show that they are all order bounded above on $L^2$. Indeed, take any order interval $[U,V]\subset L^2$ and any $X\in[U,V]$. The desired order boundedness  property is immediate by the following inequalities.
\begin{align*}
U-\E[V]\leq &X-\E[X]\leq V-\E[U]\\
0\leq & (X-\E[X])^+\leq (V-\E[U])^+\\
 0\leq &(X-\E[X])^-\leq (U-\E[V])^-
\end{align*} 

\end{example}

\begin{example}
General variability measures were introduced in Bellini et al.\ \cite{BFWW22} as  law-invariant, positive-homogeneous functionals that vanish on constants. Many of them are also  order bounded above, although usually not monotone. In fact, all the three one-parameter families of variability measures in \cite[Section 2.3]{BFWW22} are easily seen to be order bounded above but not monotone. Let's discuss in details the  class of  inter-ES differences $\Delta_p^{\rm ES}$, $p\in(0,1)$, on $L^1$. Let $X\in L^1$ and $p\in (0,1)$. Recall the right and left expected shortfalls of $X$:
$$\mathrm{ES}_p(X)=\frac{1}{1-p}\int_p^1F_X^{-1}(t)\,\mathrm{d}t,\quad \mathrm{ES}_p^-(X)=\frac{1}{p}\int_0^pF_X^{-1}(t)\,\mathrm{d}t,$$
where $F_X^{-1}(t)=\inf\{x\in\R:\bP(X\leq x)\geq t\}$ is the left quantile function of $X$. 
The inter-ES difference $\Delta_p^{\rm ES}$ is defined by 
$$\Delta_p^{\rm ES}(X):=\mathrm{ES}_p(X)-\mathrm{ES}^-_{1-p}(X)=\mathrm{ES}_p(X)+\mathrm{ES}_p(-X).$$
$\Delta_p^{\rm ES}$ is clearly convex.
Take any order interval $[U,V]\subset L^1$ and any $X\in[U,V]$. By the monotonicity of expected shortfall, we get $$\Delta_p^{\rm ES}(X)\leq \mathrm{ES}_p(V)+\mathrm{ES}_p(-U)<\infty.$$ This proves that $\mathrm{ES}_p$ is order bounded above on $L^1$. Now suppose that $U\in L^1$ follows a uniform distribution on $[-1,1]$. Then $$U \leq 1,\quad \Delta_p^{\rm ES}(U)=2\mathrm{ES}_p(U)>0=\Delta_p^{\rm ES}(1),$$
$$-U \leq 1,\quad \Delta_p^{\rm ES}(-U)=2\mathrm{ES}_p(U)>0=\Delta_p^{\rm ES}(1).$$
Thus $\Delta_p^{\rm ES}$ is not monotone.
\end{example}

Our main result in this section is as follows. Recall  first  that a topological vector space $(\mathcal{X},\tau)$ is a  \emph{Fr\'echet lattice} if $\cX$ is a vector lattice and  $\tau$ is induced by a complete metric such that $0$ has a fundamental system of solid neighborhoods. A neighborhood $\mathcal{V}$ of $0 \in \cX$ is solid if $Y \in \mathcal{V}$ whenever  $X \in \mathcal{V}$ and $Y \in \cX$ satisfy that $|Y| \leq |X|$. All $L^p$ spaces and Orlicz spaces equipped with their natural norm are  Banach lattices and, in particular, are Fr\'echet lattices.

\begin{theorem}\label{theorem1}
Let $(\cX,\tau)$ be a  Fr\'echet lattice. 
Let $\rho:\mathcal{X} \rightarrow \mathbb{R}$ be a convex, order bounded above functional. Then $\rho $ is  $\tau$-continuous.
\end{theorem}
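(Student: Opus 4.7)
I would first reduce to the local statement that $\rho$ is bounded above on some $\tau$-neighborhood of $0$. Indeed, a real-valued convex functional on a Hausdorff topological vector space that is locally bounded above at a single point is continuous there, and since $\rho$ is finite and convex on all of $\cX$, continuity at $0$ automatically propagates to continuity at every point.

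To obtain local boundedness above at $0$, I would argue by contradiction. If $\rho$ were unbounded above on every $\tau$-neighborhood of $0$, one could pick a sequence $X_n\to 0$ with $\rho(X_n)\geq n$. The topology of the Fr\'echet lattice is given by a solid $F$-norm $\|\cdot\|$; using $\|tX\|\to 0$ as $X\to 0$ for each fixed scalar $t$, I would pass to a subsequence satisfying $\|2^k X_{n_k}\|<2^{-k}$. The partial sums of $\sum_{k\geq 1}2^k|X_{n_k}|$ are then $\tau$-Cauchy, so completeness and $\tau$-continuity of the lattice operations deliver a positive element $Y\in\cX$ with $2^k|X_{n_k}|\leq Y$ for every $k$. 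In particular $X_{n_k}\in[-Y,Y]$, and order boundedness above of $\rho$ forces
\begin{equation*}
\rho(X_{n_k})\leq \sup\rho([-Y,Y])<\infty,
\end{equation*}
contradicting $\rho(X_{n_k})\to\infty$.

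The main technical obstacle is this subsequence extraction: in a Fr\'echet lattice one has only a (possibly non-homogeneous) $F$-norm rather than a genuine norm, so the simple estimate $\sum 2^k\|X_{n_k}\|<\infty$ that works in the Banach-lattice case is not directly available. The remedy is to control the $F$-norms of the \emph{rescaled} vectors $2^k X_{n_k}$ one subsequence index at a time, which is possible by continuity of scalar multiplication. Once the dominating element $Y$ is in hand, the remainder of the argument is purely lattice-theoretic and uses only the order-boundedness hypothesis; convexity enters only in the final passage from local upper boundedness to $\tau$-continuity.
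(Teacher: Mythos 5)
Your proposal is correct and rests on the same key construction as the paper's proof: extracting a subsequence whose dilates $2^k X_{n_k}$ (resp.\ $n\lvert X_n - X\rvert$ in the paper) lie in rapidly shrinking solid neighborhoods, summing and using completeness plus closedness of the positive cone to produce a single dominating element $Y$, and then invoking order boundedness above on the order interval $[-Y,Y]$. The only difference is packaging: you reduce to upper boundedness on a neighborhood of $0$ and cite the classical lemma that a convex function locally bounded above at one point is continuous everywhere, whereas the paper carries out that convexity argument explicitly (the $\varepsilon$-dilation and the reflection $2X - X_n$) at an arbitrary point; both are standard and equivalent.
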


Recall that monotone functionals are order bounded above and Banach lattices are Fr\'echet. Thus Theorem \ref{theorem1} includes the preceding  theorem of Ruszczy\'{n}ski and Shapiro as a special case.

\begin{proof}[Proof of Theorem \ref{theorem1}]
Let $(X_n)$ and $X$ be such that $X_n\xrightarrow{{\tau}}X $ in $\cX$. We want to show that $\rho(X_n)\rightarrow \rho(X)$. Suppose otherwise that $\rho(X_n)\not\rightarrow \rho(X)$. By passing to a subsequence, we may assume that 
\begin{align}\label{eq1}
\abs{\rho(X_n)-\rho(X)}>\varepsilon_0\quad \text{for some }\varepsilon_0>0 \text{ and any }n\in\N.
\end{align}

Let $(\mathcal{V}_n)$ be a basis of $0$ for $\tau$ consisting of solid neighbourhoods such that $\mathcal{V}_{n+1}+\mathcal{V}_{n+1} \subseteq \mathcal{V}_n$ for each $n\in\N$. By passing to a further subsequence of $(X_n)$, we may assume  that $n\abs{X_{n}-X} \in  \mathcal{V}_n$ for any $n\geq 1$. Put $W_n=\sum_{i=1}^n i\abs{X_{i}-X} $. For any $n,m \in \mathbb{N}$, we have $$W_{n+m}-W_n= \sum_{i=n+1}^{n+m} i\abs{X_{i}-X} \in \mathcal{V}_{n+1}+\mathcal{V}_{n+2}+\cdots+\mathcal{V}_{n+m} \subseteq \mathcal{V}_{n}.$$
Thus $(W_n)$ is a $\tau$ Cauchy sequence. By the completeness of $\tau$,   there exists $Y \in \mathcal{X}$ such that $W_n \xrightarrow{{\tau}} Y$. Since $(W_n)$ is an increasing sequence, $W_n\uparrow Y$ (\cite[Theorem 3.46]{AB06}). In particular, it follows that $W_n\leq Y$ so that

\begin{align}\label{eq2}\abs{X_n-X}\leq \frac{1}{n}Y\quad\text{ for any }n\in\N.\end{align}
Moreover, since $\rho$ is order bounded above on $[X-Y,X+Y]=X+[-Y,Y]$, there exists a real number $M >0$  such that 
\begin{align}\label{eq3}
\rho(X+Z) \leq M \quad\text{ for any }Z \in [-Y,Y],\text{ i.e., whenever }\abs{Z}\leq Y.
\end{align}

Now fix any $\varepsilon>0$. Put  $N=\lfloor \frac{1}{\varepsilon}\rfloor+1$. By \eqref{eq2}, 
\begin{align}\label{eq4}
\frac{1}{\varepsilon}\abs{X_n-X}\leq Y\quad\text{ for any }n\geq N.
\end{align}
On one hand, by the convexity of $\rho$ and the following identity
\begin{align*}
X_n=&(1-\varepsilon)X+\varepsilon\Big(X+\frac{1}{\varepsilon}(X_n-X)\Big),
\end{align*}
we have
\begin{align*}  \rho(X_n) \leq& (1-\varepsilon) \rho(X)+\varepsilon \rho\Big(X+\frac{1}{\varepsilon}(X_n-X)\Big),
\end{align*}
implying that  
\begin{align*}
\rho(X_n)-\rho(X) \leq \varepsilon\Big( \rho\Big(X+\frac{1}{\varepsilon}(X_n-X)\Big)-\rho(X) \Big).
\end{align*}
This together with \eqref{eq4} and \eqref{eq3}  implies that
\begin{align}\label{eq5}
\rho(X_n)-\rho(X) \leq \varepsilon\big(M-\rho(X) \big)  \quad\text{ for any }n\geq N.
\end{align}
On the other hand,  by the convexity of $\rho$ and $$2X-X_n=(1-\varepsilon)X+\varepsilon\Big(X+\frac{1}{\varepsilon}(X-X_n)\Big), $$
we have as before that
\begin{align*}
\rho(2X-X_n)\leq &(1-\varepsilon) \rho(X)+\varepsilon \rho\Big(X+\frac{1}{\varepsilon}(X-X_n)\Big)\\
\leq & (1-\varepsilon) \rho(X)+\varepsilon M,
\end{align*}
for any $n\geq N$. In particular, $$\rho(2X-X_n)-\rho(X)\leq \varepsilon(M-\rho(X)) \quad\text{ for any }n\geq N.$$
By  $X=\frac{1}{2} X_n+\frac{1}{2}(2X-X_n)$ and the convexity of $\rho$,  we also get 
$$\rho(X) \leq \frac{1}{2} \rho(X_n)+\frac{1}{2} \rho(2X-X_n)$$
so that 
$$ \rho(X)-\rho(X_n) \leq \rho(2X-X_n)-\rho(X).$$
It follows that 
\begin{align}\label{eq6}
\rho(X)-\rho(X_n)\leq \varepsilon(M-\rho(X)) \quad\text{ for any }n\geq N.
\end{align}
Combining \eqref{eq5} and \eqref{eq6}, we have $$\abs{\rho(X)-\rho(X_n)}\leq \varepsilon(M-\rho(X)) \quad\text{ for any }n\geq N.$$
Hence, $\rho(X_n)\rightarrow \rho(X)$. This contradicts \eqref{eq1} and completes the proof.
\end{proof}

\section{Strong Consistency}
In this section, we discuss the strong consistency of estimating the risk $\rho(X)$ using estimates drawn from the empirical distributions. This problem has been studied for general convex risk measures on $L^p$ spaces and Orlicz hearts  in \cite{S13} and \cite{KSZ14}, respectively. We are motivated to study the case of general Orlicz spaces.


Throughout this section, fix a nonatomic probability space $(\Omega,\mathcal{F},\mathbb{P})$. 
Let $L^0$ be the space of all random variables on $\Omega$, with a.s.\ equal random variables identified as the same. 
Let $\cX$ be a subset of $L^0$. Denote  the set of distributions of all random variables in $\cX$ by $$\mathcal{M}(\mathcal{X})=\{ \mathbb{P} \circ X^{-1}  :  X \in \mathcal{X}\}.$$
Recall that a law-invariant functional $\rho$ on $\cX$  induces a natural mapping $\cR_\rho$ on $\cM(\cX)$ by 
$$\cR_\rho(\mathbb{P} \circ X^{-1})=\rho(X),\quad\text{ for any }X\in\cX.$$

Recall that a sequence $(X_n)$ of random variables is said to be stationary if for any $k, n \in \mathbb{N}$ and any $x_1,...,x_n \in \mathbb{R}$, it holds that $\mathbb{P}(X_1 \leq x_1,\cdots, X_n \leq x_n)=\mathbb{P}(X_{k+1}\leq x_1\cdots, X_{k+n}\leq x_n)$. 
Let $\mathcal{B}$ be  the Borel $\sigma$-algebra on $\mathbb{R}^{\mathbb{N}}$. A set $A \in \mathcal{F}$ is said to be invariant if there exists $B \in \mathcal{B}$ such that $A=\{(X_n)_{n \geq k} \in B\}$ for every $k\in \mathbb{N}$. 
A stationary sequence is said to be ergodic if every invariant set  has probability zero or one. Birkhoff's ergodic theorem states that the arithmetic averages of a stationary ergodic sequence $(X_n)$ converge a.s.\ to $\E[X_1]$ whenever $\E[\abs{X_1}]<\infty$.  See \cite[Section 6.7]{B91} for more facts regarding stationary and ergodic processes. 

Let $\cX$ be a subset of $L^0$ containing $L^\infty$. Take any $X\in \cX$. Let $(X_n)$ be a  stationary  ergodic sequence of random variables with the same distribution as $X$.   We denote the {\em empirical distribution} of $X$  arising from  $X_1,\dots,X_n$ by $$\widehat{m}_n=\frac{1}{n}  \sum_{i=1}^n \delta_{X_i};$$
here $\delta_x$ is the Dirac measure on $\R$ at $x$. Since $L^\infty\subset \cX$,  $\widehat{m}_n\in \cM(L^\infty)\subset \cM(\cX)$. This allows us to consider the corresponding {\em empirical estimate} for $\rho(X)$:
$$\widehat{\rho}_n:=\mathcal{R}_{\rho} (\widehat{m}_n);$$
We say that $\rho$ is {\em strongly consistent} at $X$  if for any   stationary  ergodic sequence of random variables with the same distribution as $X$, $$\widehat{\rho}_n=\mathcal{R}_{\rho} (\widehat{m}_n) \xrightarrow{{a.s.}} \rho(X).$$
We refer to  \cite{S13, KSZ14,KSZ17} and the references therein for  literature on strong (and weak) consistency of risk measures.
In particular, we are interested in the following result.

\begin{theorem*}[Kr\"{a}tschmer et al.\ {\cite[Theorem 2.6]{KSZ14}}]A  norm-continuous, law-invariant functional on an Orlicz heart   is strongly consistent everywhere.
\end{theorem*}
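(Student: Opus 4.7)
The plan is to use a quantile representation combined with law invariance to reduce the strong-consistency problem to norm convergence in the Orlicz heart $H^\Phi$, which is then handled by the assumed norm continuity of $\rho$. Since the underlying probability space is nonatomic, fix on it a random variable $U$ uniformly distributed on $[0,1]$. Set $Y := F_X^{-1}(U)$ and, for each $\omega$, $Y_n(\omega) := \widehat{F}_{n,\omega}^{-1}(U)$, where $\widehat{F}_{n,\omega}$ is the empirical CDF of $X_1(\omega),\dots,X_n(\omega)$. Then $Y$ has the distribution of $X$, so $Y\in H^\Phi$ and $\rho(Y)=\rho(X)$ by law invariance; similarly $Y_n(\omega)\in L^\infty\subset H^\Phi$ has distribution $\widehat m_n(\omega)$, so $\rho(Y_n(\omega))=\widehat\rho_n(\omega)$. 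It thus suffices to show $\|Y_n(\omega)-Y\|_\Phi\to 0$ for $\mathbb{P}$-a.e.\ $\omega$.

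Two consequences of Birkhoff's ergodic theorem drive the argument. First, the ergodic Glivenko--Cantelli theorem gives $\sup_t|\widehat{F}_{n,\omega}(t)-F_X(t)|\to 0$ a.s., whence $\widehat{F}_{n,\omega}^{-1}(u)\to F_X^{-1}(u)$ at every continuity point of $F_X^{-1}$, so $Y_n(\omega)\to Y$ a.s. Second, since $X\in H^\Phi$ means $\E[\Phi(c|X|)]<\infty$ for every $c>0$, Birkhoff applied to the stationary ergodic sequence $\Phi(c|X_i|)$ gives
\[
\E\bigl[\Phi(c|Y_n(\omega)|)\bigr]=\frac{1}{n}\sum_{i=1}^n \Phi(c|X_i(\omega)|)\xrightarrow{a.s.}\E[\Phi(c|X|)]=\E[\Phi(c|Y|)].
\]
Recall that on $H^\Phi$ the Luxemburg norm is order continuous, so $\|Z_n\|_\Phi\to 0$ is equivalent to the modular convergence $\E[\Phi(c|Z_n|)]\to 0$ for every $c>0$. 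Apply this with $Z_n := Y_n(\omega)-Y$: convexity of $\Phi$ gives the pointwise bound $\Phi(c|Z_n|)\le h_n := \tfrac{1}{2}\Phi(2c|Y_n(\omega)|)+\tfrac{1}{2}\Phi(2c|Y|)$; the first ingredient gives $h_n\to h:=\Phi(2c|Y|)$ a.e.; and the second gives $\E[h_n]\to \E[h]$. The generalized dominated convergence theorem (variable dominant whose integrals converge to that of its pointwise limit) then yields $\E[\Phi(c|Z_n|)]\to 0$, so $\|Y_n(\omega)-Y\|_\Phi\to 0$ a.s. Norm continuity of $\rho$ finally gives $\widehat\rho_n(\omega)=\rho(Y_n(\omega))\to\rho(Y)=\rho(X)$ a.s.

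The main obstacle is the passage from a.e.\ quantile convergence to Orlicz-norm convergence. A uniform dominating function in $H^\Phi$ is generally unavailable, since $\sup_n|X_n|$ need not lie in $H^\Phi$. The resolution exploits two features specific to the heart: the norm there is order continuous, hence coincides with modular convergence at every level $c>0$; and Birkhoff's theorem delivers exactly the matching moment control $\E[\Phi(c|Y_n(\omega)|)]\to\E[\Phi(c|Y|)]$ needed to run a variable-dominant dominated convergence argument. This is the step that forces the hypothesis to be $H^\Phi$ rather than the full Orlicz space $L^\Phi$, and it is presumably what the authors' extension to general Orlicz spaces (via order continuity of $\rho$ in place of norm continuity) has to bypass.
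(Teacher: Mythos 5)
Your proof is correct and follows essentially the same route as the argument the paper attributes to Kr\"atschmer et al.\ (and only cites, rather than reproves): Birkhoff's ergodic theorem plus a Skorohod-type coupling --- here made explicit via the quantile construction $Y=F_X^{-1}(U)$, $Y_n(\omega)=\widehat{F}_{n,\omega}^{-1}(U)$ --- yielding $\|X_n^{\omega}-X^{\omega}\|_{\Phi}\to 0$ in the heart, after which law invariance and norm continuity finish the job. The one bookkeeping point to add is that the Birkhoff exceptional null set depends on $c$, so you should intersect over a countable family such as $c\in\mathbb{N}$, which suffices because $\E[\Phi(k|Z_n|)]\le 1$ for all large $n$ already forces $\|Z_n\|_{\Phi}\le 1/k$.
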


We remark that this theorem was originally stated for real-valued, law-invariant, convex risk measures. In their definition  in \cite{KSZ14}, convex risk measures   are assumed to be monotone, and thus are norm continuous by  the aforementioned theorem of Ruszczy\'{n}ski and Shapiro. A quick examination of the proof of \cite[Theorem 2.6]{KSZ14} shows that norm continuity and law invariance are the only ingredients of the functional used.

Let's  recall the definitions of Orlicz spaces and hearts. A function $\Phi:[0,\infty)\to[0,\infty)$ is called an {\em Orlicz function} if it is non-constant, convex, increasing,  and $\Phi(0)=0$.   The {\em Orlicz space} $L^\Phi$ is the space of all $X\in L^0$ such that the {\em Luxemburg norm} is finite:
\[
\norm{X}_\Phi:=\inf\Big\{\frac{1}{\lambda}:\lambda>0 \text{ and } \E\left[\Phi\big(\lambda\abs{X}\big)\right]\leq 1 \Big\}<\infty.
\]
The {\em Orlicz heart} $H^\Phi$ is a subspace of $L^\Phi$ defined by \[
H^\Phi := \left\{X\in L^0 : \E\left[\Phi\big(\lambda|X|\big)\right]<\infty \text{ for any } \lambda >0\right\}.
\]
We refer to \cite{ES92} for standard terminology and facts on Orlicz spaces. Risk measures on Orlicz spaces have been studied extensively; see, e.g., \cite{BCc, BR12,BFG11,  GLMX18, GLX19, GMX20,GX18} and the references therein.

%

The above theorem in conjunction with Theorem \ref{theorem1} immediately yields the following result, which improves \cite[Theorem 2.6]{KSZ14}.

\begin{corollary}
A  convex, law-invariant, order bounded above functional on an Orlicz heart is strongly consistent everywhere.
\end{corollary}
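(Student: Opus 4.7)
The plan is to simply chain together Theorem \ref{theorem1} with the cited theorem of Kr\"atschmer, Schied and Z\"ahle, since the corollary has been set up precisely to be the composition of these two facts. No new estimates are required.

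First, I would observe that the Orlicz heart $H^\Phi$, equipped with the Luxemburg norm $\norm{\cdot}_\Phi$, is a Banach lattice. This is standard: $H^\Phi$ is a closed order ideal of $L^\Phi$ (see \cite{ES92}), and the Luxemburg norm is a lattice norm making $H^\Phi$ complete. In particular, $(H^\Phi,\norm{\cdot}_\Phi)$ is a Fr\'echet lattice, so Theorem \ref{theorem1} applies.

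Next, let $\rho:H^\Phi\to\R$ be convex, law-invariant and order bounded above. Applying Theorem \ref{theorem1} with $\cX=H^\Phi$ and $\tau$ the Luxemburg-norm topology, I conclude that $\rho$ is norm continuous on $H^\Phi$.

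Finally, I would invoke the theorem of Kr\"atschmer, Schied and Z\"ahle recalled above (\cite[Theorem 2.6]{KSZ14}, in the form noted in the remark after its statement, which requires only norm continuity and law invariance rather than monotonicity). Since $\rho$ is now known to be norm continuous and is law invariant by assumption, that theorem yields that $\rho$ is strongly consistent at every $X\in H^\Phi$, which is exactly the conclusion of the corollary. There is no real obstacle here beyond verifying the hypotheses of the two cited results; the substance of the improvement over \cite[Theorem 2.6]{KSZ14} is packaged entirely inside Theorem \ref{theorem1}.
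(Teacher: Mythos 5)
Your proposal is correct and matches the paper's argument exactly: the paper presents this corollary as an immediate consequence of Theorem \ref{theorem1} (which gives norm continuity on the Banach lattice $H^\Phi$) combined with \cite[Theorem 2.6]{KSZ14} in the form requiring only norm continuity and law invariance. The only extra content in your write-up is the routine verification that $H^\Phi$ is a Banach lattice, which the paper takes for granted.
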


The above theorem of Kr\"{a}tschmer et al is essentially due to the fact that for any $X\in H^\Phi$, and for a.e.\ $\omega\in \Omega$, there exist a random variable $X^\omega$ on $\Omega$ with same distribution as $X$ and  a sequence of  random variables $(X_n^\omega)$ on $\Omega$ with distributions $\widehat{m}_n(\omega)$'s  such that 
\begin{align}\label{eq7}
\norm{X_n^\omega-X^\omega}_\Phi\rightarrow 0.
\end{align}
This, however, does not hold for arbitrary random variables in a general Orlicz space $L^\Phi$. Specifically, when $\Phi$ fails the $\Delta_2$-condition, there exists $X\in L^\Phi\backslash H^\Phi$. For this $X$, \eqref{eq7} must fail: $X_n^\omega$ takes only at most $n$ values and thus is  a simple random variable lying in $H^\Phi$; therefore, \eqref{eq7} would imply $X\in H^\Phi$ as well.

%

We extend the theorem of  Kr\"{a}tschmer et al.\ as follows.  Recall first that on a set $\cX\subset L^0$, a functional $\rho:\cX\rightarrow  \R$ is said to be {\em order continuous} at $X\in\cX$ if $\rho(X_n)\rightarrow \rho(X)$ whenever $X_n\xrightarrow{o}X $  in $\cX$.  In the literature, order continuity is also termed as the {\em Lebesgue property}.

\begin{theorem}\label{theorem2}
An order-continuous, law-invariant  functional on an Orlicz space  is strongly consistent  everywhere.
\end{theorem}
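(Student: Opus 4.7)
The plan is to reduce the statement to an order-dominated Skorokhod-type representation of the empirical measures and then apply the order continuity of $\rho$. Since $(\Omega,\cF,\bP)$ is nonatomic, fix a random variable $U:\Omega\to(0,1)$ with uniform distribution, and set $\widetilde{X} := F_X^{-1}(U)$ and $X_n^\omega := F_{\widehat{m}_n(\omega)}^{-1}(U)$. Then $\widetilde{X}\sim X$ and $X_n^\omega\sim\widehat{m}_n(\omega)$, so by law invariance $\cR_\rho(\widehat{m}_n(\omega))=\rho(X_n^\omega)$ and $\rho(X)=\rho(\widetilde{X})$, and it suffices to prove $\rho(X_n^\omega)\to\rho(\widetilde{X})$ for a.e.\ $\omega$.

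For a.e.\ $\omega$, Birkhoff's ergodic theorem applied to $\one\{X_i\le q\}$ for $q$ in a countable dense subset of $\R$ yields $F_{\widehat{m}_n(\omega)}(x)\to F_X(x)$ at every continuity point of $F_X$, so by standard quantile theory $X_n^\omega\to \widetilde{X}$ pointwise almost everywhere on $\Omega$. The main obstacle is that this pointwise convergence is not obviously order-dominated in $L^\Phi$, so order continuity cannot be applied directly to the full sequence $(X_n^\omega)_n$.

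To surmount this, I would pick $\lambda>0$ with $\E[\Phi(\lambda\abs{X})]<\infty$, possible because $X\in L^\Phi$. Since $\Phi(\lambda\abs{X_i})\in L^1$, Birkhoff gives, for a.e.\ $\omega$,
$$\E[\Phi(\lambda\abs{X_n^\omega})]=\frac{1}{n}\sum_{i=1}^n\Phi(\lambda\abs{X_i(\omega)})\longrightarrow \E[\Phi(\lambda\abs{X})]=\E[\Phi(\lambda\abs{\widetilde{X}})].$$
Combined with the a.s.\ convergence $\Phi(\lambda\abs{X_n^\omega})\to\Phi(\lambda\abs{\widetilde{X}})$, Scheff\'e's lemma upgrades this to $L^1(\Omega)$ convergence: $\Phi(\lambda\abs{X_n^\omega})\to\Phi(\lambda\abs{\widetilde{X}})$ in $L^1(\Omega)$.

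From any subsequence I would then extract a further subsequence $(X_{n_k}^\omega)_k$ with $\sum_k\norm{\Phi(\lambda\abs{X_{n_k}^\omega})-\Phi(\lambda\abs{\widetilde{X}})}_{L^1}<\infty$, so that $g:=\Phi(\lambda\abs{\widetilde{X}})+\sum_k\bigabs{\Phi(\lambda\abs{X_{n_k}^\omega})-\Phi(\lambda\abs{\widetilde{X}})}$ lies in $L^1(\Omega)$ and dominates every $\Phi(\lambda\abs{X_{n_k}^\omega})$ a.s. Setting $Y:=\Phi^{-1}(g)/\lambda$, the inequality $\Phi(\lambda\abs{X_{n_k}^\omega})\le g$ forces $\abs{X_{n_k}^\omega}\le Y$, while $\E[\Phi(\lambda Y)]\le\E[g]<\infty$ shows $Y\in L^\Phi$. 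Hence $X_{n_k}^\omega\xrightarrow{o}\widetilde{X}$ in $L^\Phi$, and order continuity of $\rho$ yields $\rho(X_{n_k}^\omega)\to\rho(\widetilde{X})$. Since every subsequence of $(\rho(X_n^\omega))_n$ admits such a convergent further subsequence, the whole sequence converges to $\rho(\widetilde{X})=\rho(X)$, completing the proof.
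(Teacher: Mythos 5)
Your argument is correct and is essentially the paper's own proof with the lemmas inlined: Birkhoff's theorem gives $\Phi_\lambda$-weak convergence of the empirical distributions, and your quantile-coupling plus Scheff\'e plus summable-subsequence domination step is exactly the Skorohod-type representation of Lemma \ref{lem1}, after which order continuity finishes the job via the subsequence principle (the route formalized in Lemma \ref{lem2}). The only point to polish is that $\Phi^{-1}$ must be read as the generalized inverse $y\mapsto\sup\{x\ge 0:\Phi(x)\le y\}$, since an Orlicz function need not be strictly increasing; with that reading $\Phi(\lambda\abs{X_{n_k}^\omega})\le g$ still forces $\abs{X_{n_k}^\omega}\le Y$ and $\E[\Phi(\lambda Y)]\le\E[g]<\infty$, so the domination goes through.
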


For the proof of Theorem \ref{theorem2}, we need to establish a few technical lemmas, which along the way also reveal why order continuity is the most natural condition for general Orlicz spaces. For an Orlicz function $\Phi$, the Young class is defined by
\[
Y^\Phi := \left\{X\in L^0 : \E\left[\Phi\left(|X|\right)\right]<\infty\right\}.
\]
It is easy to see that $H^\Phi\subset Y^\Phi\subset L^\Phi$. As in \cite{KSZ14}, we use the term $\Phi$-weak topology in place of the $\Phi(\abs{\cdot})$-weak topology on $\mathcal{M}(Y^{\Phi})$ for brevity. This topology is metrizable (see e.g.\  \cite[Corollary A.45]{FolmerSchied2011}). For the special case where $\Phi(x)=\frac{x^p}{p}$ for some $1 \leq p<\infty$, the $\Phi$-weak topology  generates the Wasserstein metric of order $p$ (see e.g.\ \cite[Theorem 7.12]{VIL21}). Moreover, for a sequence $(\mu_n)\subset \mathcal{M}(Y^{\Phi})$  and $ \mu_0 \in \mathcal{M}(Y^{\Phi})$,   $(\mu_n)$ {\em converges  $\Phi$-weakly } to $\mu_0$, written as $\mu_n\xrightarrow{\Phi\text{-weakly}}\mu_0$, iff 
\[
\ \mu_n\xrightarrow{\text{weakly}}\mu \ \ \mbox{and} \ \  \int \Phi(|x|) \mu_{n}(dx) \rightarrow \int \Phi(|x|) \mu_{0}(dx).\]

The following Skorohod representation for $\Phi$-weak convergence is a general order version of  \cite[Theorem 3.5]{KSZ14} and \cite[Theorem 6.1]{KSZ17} beyond the Orlicz heart and without any restrictions on $\Phi$.

\begin{lemma}\label{lem1}
\begin{enumerate}
\item[(i)] Let $(\mu_n)$ be a sequence in $\mathcal{M}(Y^{\Phi})$ that converges $\Phi$-weakly to some $\mu_0 \in \mathcal{M}(Y^{\Phi})$. Then there exist  a subsequence $(\mu_{n_k})$ of $(\mu_n)$, a sequence $(X_k)$ in $Y^{\Phi}$  and $X\in Y^{\Phi}$ such that  $X_k$ has distribution $\mu_{n_k}$ for each $k \in \mathbb{N}$,  $X$ has distribution  $\mu_0$, and $X_k \xrightarrow{o} X$ in $Y^{\Phi}$.
\item[(ii)] Let $(X_n)$  in $Y^{\Phi}$  and $X \in Y^{\Phi}$ be such that $X_n\xrightarrow{o}X \text{ in } Y^{\Phi}$. Then $\mu_n\xrightarrow{\Phi\text{-weakly}}\mu_0$, where $\mu_n$'s are the distributions of $X_n$'s and $\mu_0$ is the distribution of $X$, respectively. 
\end{enumerate}
\end{lemma}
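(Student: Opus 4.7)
The plan is to handle part (ii) first (which is the easy direction) and then produce the Skorohod realization with control in $Y^\Phi$ for part (i). For part (ii), order convergence in $Y^\Phi$ means $X_n\to X$ a.s.\ with $|X_n|\leq Y$ a.s.\ for some $Y\in Y^\Phi$. Weak convergence $\mu_n\to\mu_0$ is then immediate from the bounded convergence theorem applied to $f\in C_b(\R)$. Since $\Phi$ is finite-valued and convex it is continuous on $[0,\infty)$, so $\Phi(|X_n|)\to\Phi(|X|)$ a.s., and the domination $\Phi(|X_n|)\leq\Phi(Y)\in L^1$ (which holds because $Y\in Y^\Phi$) allows dominated convergence to give $\int\Phi(|x|)\,d\mu_n\to\int\Phi(|x|)\,d\mu_0$. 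This finishes part (ii).

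For part (i), I would realize all the laws simultaneously on $\Omega$ via the quantile transform. Since $(\Omega,\cF,\bP)$ is nonatomic it supports a uniform random variable $U\sim\mathrm{Unif}(0,1)$; set $Y_n:=F_{\mu_n}^{-1}(U)$ and $Y:=F_{\mu_0}^{-1}(U)$. Then $Y_n$ has law $\mu_n$, $Y$ has law $\mu_0$, and the weak convergence $\mu_n\to\mu_0$ yields $Y_n\to Y$ a.s.\ because $F_{\mu_0}^{-1}$ has at most countably many discontinuities. By continuity of $\Phi$, $\Phi(|Y_n|)\to\Phi(|Y|)$ a.s., while the hypothesis $\int\Phi\,d\mu_n\to\int\Phi\,d\mu_0$ is exactly $\E[\Phi(|Y_n|)]\to\E[\Phi(|Y|)]$. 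Scheffé's lemma then upgrades this to $L^1$ convergence: $\E\bigl[\bigl|\Phi(|Y_n|)-\Phi(|Y|)\bigr|\bigr]\to 0$.

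The decisive step is manufacturing a single dominating random variable in $Y^\Phi$. Extract a subsequence $(n_k)$ with $\sum_k\bignorm{\Phi(|Y_{n_k}|)-\Phi(|Y|)}_{L^1}<\infty$. Then $g:=\sum_k\bigabs{\Phi(|Y_{n_k}|)-\Phi(|Y|)}$ is in $L^1$, so in particular $\Phi(|Y_{n_k}|)\leq\Phi(|Y|)+g$ for every $k$. Set $Z:=\sup_k|Y_{n_k}|$. Because $\Phi$ is nondecreasing and continuous, $\Phi(Z)=\sup_k\Phi(|Y_{n_k}|)\leq\Phi(|Y|)+g\in L^1$, so $Z\in Y^\Phi$. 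Combined with $|Y_{n_k}|\leq Z$ and $Y_{n_k}\to Y$ a.s., this gives $Y_{n_k}\xrightarrow{o}Y$ in $Y^\Phi$; relabeling $X_k:=Y_{n_k}$, $X:=Y$ finishes the proof.

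The main obstacle is exactly this last step: Skorohod alone gives a.s.\ convergence with no integrability control, and without the $\Phi$-moment hypothesis on $\mu_n$ there is no hope of sup-dominating the $Y_{n_k}$ in $Y^\Phi$. The point is that the hypothesis $\int\Phi\,d\mu_n\to\int\Phi\,d\mu_0$, processed through Scheffé's lemma and a summable-tails subsequence, converts $\Phi$-weak convergence into precisely the dominated a.s.\ convergence that defines order convergence on general Orlicz spaces — which is why order continuity is the natural regularity assumption beyond the Orlicz heart.
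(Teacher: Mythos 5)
Your proposal is correct and follows essentially the same route as the paper: Skorohod/quantile-transform coupling, continuity of $\Phi$ plus convergence of $\Phi$-moments upgraded to $L^1$ convergence via Scheff\'e's lemma (the paper cites the equivalent result from Aliprantis--Burkinshaw), a summable-tails subsequence to build the dominating variable, and the identity $\Phi(\sup_k|Y_{n_k}|)=\sup_k\Phi(|Y_{n_k}|)$ to conclude $\sup_k|Y_{n_k}|\in Y^\Phi$. Part (ii) likewise matches the paper's dominated-convergence argument.
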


\begin{proof}
We start with the following observation. Since $\Phi$ is continuous and increasing, for any sequence $(X_n)$ in $Y^{\Phi}$ we have 

\begin{align} \label{eq8}
\mathbb{E}\left[\Phi \Big(\sup_{n \in \mathbb{N}} |X_n|\Big)\right]=\mathbb{E}\left[\sup_{n \in \mathbb{N}} \Phi(|X_n|)\right]
\end{align}

(i). Take $(\mu_n)$ in $\mathcal{M}(Y^{\Phi})$ that converges $\Phi$-weakly to $\mu_0 \in \mathcal{M}(Y^{\Phi})$. Since the probability space is nonatomic, the classical Skorohod representation yields $(X_n)\subset Y^\Phi$ and $X\in Y^\Phi$ such that $X_n \sim \mu_n$ for every $n\in\N$, $X \sim \mu_0$ , and  $X_n\xrightarrow{a.s.}X$. Clearly,
\begin{equation}\label{eq9}
\mathbb{E}[\Phi(|X|)]= \int \Phi(|x|) \mu_{0}(dx)= \lim_n\int \Phi(|x|) \mu_{n}(dx)=\lim_n\mathbb{E}[\Phi(|X_n|)]<\infty.
\end{equation}
Since $\Phi$ is continuous, we also have that $\Phi(\abs{X_n}) \xrightarrow{{a.s.}} \Phi(\abs{X}) $. This combined with \eqref{eq9} yields (see \cite[Theorem 31.7]{AB98}) that $$\bignorm{\Phi(\abs{X_n}) -\Phi(\abs{X})}_{L^1}\rightarrow 0.$$
Passing to a subsequence we may assume that $$\sum_{n=1}^\infty\bignorm{ \Phi(|X_n|)-\Phi(\abs{X})}_{L^1}<\infty$$ so that $$\sum_{n=1}^\infty \bigabs{\Phi(|X_n|)-\Phi(\abs{X})}\in L^1.$$ 
In particular, $$\sup_{n\in\N}\bigabs{\Phi(|X_n|)-\Phi(\abs{X})}\in L^1.$$
It follows from $\Phi(\abs{X_n})\leq \bigabs{\Phi(|X_n|)-\Phi(\abs{X})}+\Phi(\abs{X})$ that $\sup_{n\in\N }\Phi(\abs{X_n}) \in L^1 $.
Hence, by \eqref{eq8}, $\mathbb{E}[\Phi (\sup_{n \in \mathbb{N}} |X_n|)]=\mathbb{E}[\sup_{n \in \mathbb{N}} \Phi(|X_n|)]<\infty$. That is, $\sup_{n \in \mathbb{N}} \abs{X_n}\in Y^\Phi$; equivalently, $(X_n)$ is dominated in $Y^\Phi$.
In particular, we have $X_n \xrightarrow{o}X$ in $Y^{\Phi}$

(ii). Let $(X_n)$ be such that $X_n\xrightarrow{o}X$ in $Y^{\Phi}$ and $\mu_n$ be  the distribution of $X_n$ for each $n$, $\mu_0$ be the distribution of $X$. We clearly have $ \mu_n\xrightarrow{\text{weakly}}\mu $ and by the continuity of $\Phi$, we get $\Phi(|X_n|) \xrightarrow{{a.s.}} \Phi(|X|)$. Since $(X_n)$ is dominated in $Y^\Phi$, $\sup_{n\in\N}\abs{X_n}\in Y^\Phi$. Thus in view of \eqref{eq8}, we get $$\mathbb{E}\left[\sup_{n \in \mathbb{N}} \Phi(|X_n|)\right]=\mathbb{E}\left[\Phi \Big(\sup_{n \in \mathbb{N}} |X_n|\Big)\right]<\infty,$$
i.e.,  $\sup_{n \in \mathbb{N}} \Phi(|X_n|)\in L^1$. By the dominated convergence theorem, we get 

$$ \int \Phi(|x|) \mu_{0}(dx)=\mathbb{E}[\Phi(|X|)]= \lim_n\mathbb{E}[\Phi(|X_n|)]=\lim_n\int \Phi(|x|) \mu_{n}(dx). $$
This proves $\mu_n\xrightarrow{\Phi\text{-weakly}}\mu_0$.
\end{proof}

The lemma below reveals the essential and natural role of order continuity.

\begin{lemma}\label{lem2}
Let $\rho:Y^\Phi\rightarrow \R$ be law invariant. The following are equivalent.
\begin{enumerate}
\item[(i)] $\mathcal{R}_\rho$ is continuous on $\mathcal{M}(Y^{\Phi})$ with the $\Phi$-weak topology. 
\item[(ii)] $\rho$ is order continuous on  $Y^\Phi$.
\end{enumerate}
\end{lemma}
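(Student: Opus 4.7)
The plan is to use Lemma \ref{lem1} as the bridge between $\Phi$-weak convergence of distributions and order convergence of random variables, together with the fact that the $\Phi$-weak topology is metrizable (so sequential continuity suffices).

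For the implication (i) $\Rightarrow$ (ii), I would proceed directly. Suppose $X_n\xrightarrow{o}X$ in $Y^\Phi$, and let $\mu_n$, $\mu_0$ denote the distributions of $X_n$, $X$. By Lemma \ref{lem1}(ii), $\mu_n\xrightarrow{\Phi\text{-weakly}}\mu_0$. Hypothesis (i) then gives $\rho(X_n)=\mathcal{R}_\rho(\mu_n)\to \mathcal{R}_\rho(\mu_0)=\rho(X)$, which is exactly order continuity.

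For (ii) $\Rightarrow$ (i), I would use a standard subsequence argument combined with Lemma \ref{lem1}(i). Since the $\Phi$-weak topology is metrizable, it suffices to show sequential continuity. Let $\mu_n\xrightarrow{\Phi\text{-weakly}}\mu_0$ in $\mathcal{M}(Y^\Phi)$, and suppose toward a contradiction that $\mathcal{R}_\rho(\mu_n)\not\to\mathcal{R}_\rho(\mu_0)$. Then some subsequence satisfies $|\mathcal{R}_\rho(\mu_{n_k})-\mathcal{R}_\rho(\mu_0)|>\varepsilon_0$ for all $k$. This subsequence still converges $\Phi$-weakly to $\mu_0$, so by Lemma \ref{lem1}(i) we can extract a further subsequence $(\mu_{n_{k_j}})$ and find $X_j\in Y^\Phi$ with distribution $\mu_{n_{k_j}}$ and $X\in Y^\Phi$ with distribution $\mu_0$ such that $X_j\xrightarrow{o}X$ in $Y^\Phi$. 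By the order continuity of $\rho$ (hypothesis (ii)) and law invariance, $\mathcal{R}_\rho(\mu_{n_{k_j}})=\rho(X_j)\to\rho(X)=\mathcal{R}_\rho(\mu_0)$, which contradicts the choice of $(n_k)$.

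The main structural step is really the asymmetry between the two directions of Lemma \ref{lem1}: the forward direction (ii) of Lemma \ref{lem1} is clean, but the backward direction only produces a subsequence, which is precisely why the proof of (ii) $\Rightarrow$ (i) must be routed through a contradiction/subsequence extraction rather than a direct sequential continuity argument. There is no substantial analytical obstacle beyond invoking the two halves of Lemma \ref{lem1} and the metrizability of the $\Phi$-weak topology on $\mathcal{M}(Y^\Phi)$; the argument is essentially formal once those tools are in place.
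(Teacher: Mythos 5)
Your proposal is correct and follows essentially the same route as the paper: the implication (i)$\implies$(ii) is immediate from Lemma \ref{lem1}(ii), and (ii)$\implies$(i) is proved by metrizability plus a contradiction/subsequence extraction via Lemma \ref{lem1}(i), exactly as in the paper's argument. Your closing observation about the asymmetry of Lemma \ref{lem1} forcing the subsequence detour is an accurate reading of why the paper structures the proof this way.
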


\begin{proof}
(ii)$\implies$(i). Suppose that (ii) holds but  (i)  fails. Recall that the $\Phi$-weak topology is metrizable. Thus  we can find a sequence $(\mu_n)$ and $\mu_0$ in $\mathcal{M}(Y^{\Phi})$ such that $\mu_n\xrightarrow{\Phi\text{-weakly}}\mu_0$ but $\mathcal{R}_\rho (\mu_n) \not\rightarrow \mathcal{R}_{\rho}(\mu_0)$. Passing to a subsequence, we may assume that 
\begin{equation}\label{eq10} |\mathcal{R}_{\rho} (\mu_n)-\mathcal{R}_{\rho}(\mu_0)| \geq \varepsilon_0,
\end{equation}
for some $\varepsilon_0>0$ and all $n 
 \in \mathbb{N}$. By Lemma \ref{lem1}(i), there exist a subsequence $(\mu_{n_k})$ of $(\mu_n)$, a sequence $(X_k)$ in $Y^{\Phi}$  and $X\in Y^{\Phi}$ such that  $X_k$ has distribution $\mu_{n_k}$ for each $k \in \mathbb{N}$,  $X$ has distribution  $\mu_0$, and $X_k \xrightarrow{o} X$ in $Y^{\Phi}$. (ii) implies that $$\mathcal{R}_{\rho}(\mu_{n_k}) =\rho(X_k)\rightarrow \rho(X)=\mathcal{R}_{\rho}(\mu_0).$$
This contradicts \eqref{eq10} and proves (ii)$\implies$(i). The reverse implication (i)$\implies$(ii)  is immediate by Lemma \ref{lem1}(ii).
\end{proof}

We now present the proof of Theorem \ref{theorem2}.
\begin{proof}[Proof of Theorem \ref{theorem2}]
Suppose that  $\rho:L^\Phi\rightarrow \R$ is law invariant and order continuous. Take any $X\in L^\Phi$ and any  stationary   ergodic sequence of random variables with the same distribution as $X$. Denote by $\mu_0$ their distribution.
Let  $\lambda>0$ be such that $\mathbb{E}[\Phi(\lambda\abs{X})]<\infty$. Put $\Phi_{\lambda}(\cdot):=\Phi(\lambda {\cdot})$.
Arguing similarly as in the proof of \cite[Theorem 2.6]{KSZ14}, by applying Birkhoff's ergodic theorem, one obtains a measurable subset $\Omega_0$ of $\Omega $ such that $\bP(\Omega_0)=1$ and for every  $\omega\in \Omega_0$, 
$$\widehat{m}_n(\omega)\xrightarrow{\Phi_\lambda\text{-weakly}}\mu_0.$$
Since $\rho$ is order continuous on  $L^{\Phi}$ and $Y^{\Phi_\lambda}\subset L^\Phi$, $\rho$ is also order continuous on $Y^{\Phi_\lambda}$. By Lemma \ref{lem2}, $\mathcal{R}_\rho$ is continuous on $\mathcal{M}(Y^{\Phi_\lambda})$ with the $\Phi_\lambda$-weak topology. 
Thus $\widehat{\rho}_n(\omega)=\cR_\rho(\widehat{m}_n(\omega)) \rightarrow\cR_\rho(\mu_0)= \rho(X)$ for every  $\omega\in \Omega_0$. This proves that $\rho$ is strongly consistent at $X$. 
\end{proof}

\begin{remark}
In our definition of Orlicz spaces, we do not allow the Orlicz function to take the  $\infty$ value, which excludes $L^\infty$ from the above considerations.  However, Theorem \ref{theorem2} remains true for $L^\infty$. Let $\rho:L^\infty \rightarrow \mathbb{R}$ be an order continuous, law invariant functional. Take any $X\in L^\infty$ and any  stationary   ergodic sequence of random variables with the same distribution as $X$. Denote by $\mu_0$ their distribution. By Birkhoff's ergodic theorem and an application of Theorem 6.6 in \cite[Chapter 1]{P67},  there exists a measurable subset $\Omega_0$ of $\Omega $ such that $\bP(\Omega_0)=1$  and   $\widehat{m}_n(\omega)\xrightarrow{\text{weakly}}\mu_0$ for any $\omega \in \Omega_0$.   One may assume further that $\abs{X(\omega)} \leq \norm{X}_{\infty}$ for any $\omega \in \Omega_0$. Fix any $\omega \in \Omega_0$. The classical Skorohod representation yields $(X_n^{\omega})\subset L^{\infty}$ and $X\in L^{\infty}$ such that $X_n^{\omega} \sim \widehat{m}_n(\omega)$ for every $n\in\N$, $X \sim \mu_0$ , and  $X_n^{\omega}\xrightarrow{a.s.}X$. We may assume that $\abs{X_n^{\omega}} \leq \norm{X}_{\infty}$ on $\Omega$ for every $n\geq 1$.  It follows that $X_n^{\omega} \xrightarrow{{o}} X$ in $L^\infty$. Hence,  by order continuity of $\rho$, we get  $\widehat{\rho}_n(\omega)=\cR_\rho(\widehat{m}_n(\omega))=\rho(X_n^{\omega}) \rightarrow \rho(X)$. This proves  that $\rho$  is strongly consistent on $L^\infty$.
\end{remark}

Order continuity of law-invariant functionals on Orlicz spaces is generally stronger than norm continuity. In the following, we show that it is satisfied by a large class of risk measures, namely, spectral risk measures. Spectral risk measures were introduced in  Acerbi \cite{A02} and includes many important risk measures such as the expected shortfall.   Let  $\phi$ be a nonnegative and nondecreasing function   such that $\int_0^1 \phi(t)dt=1$ ($\phi$ is called a spectrum).   The associated spectral risk measure   is defined by 
$$\rho_\phi(X)=\int_{0}^1 \phi(t)F^{-1}_X(t)\,\mathrm{d}t,\quad X\in L^1$$
where $F_X^{-1}(t)=\inf\{x \in \mathbb{R}: F(x) \geq t\}$ is the left quantile function of $X$. It is known that $\rho_\phi$ takes values in $(-\infty,\infty]$ and is convex, monotone and lower semicontinuous with respect to the $L^1$ norm (see e.g.\ \cite[Lemma C.1]{AL23}). For spectral risk measures, the empirical estimator $\widehat{\rho}_n$ has the form of an $L$-statistic and the strong consistency can be studied using tools from the theory of $L$-statistics  (see e.g.\ \cite{T14}).  Below we give a simple proof of the strongly consistency of $\rho_{\phi}$ in the Orlicz space framework based on  Theorem \ref{theorem2}.

\begin{corollary}
Let $\phi$ be a spectrum function such that $\rho_{\phi}$ is real-valued on $L^\Phi$. Then $\rho_{\phi}$ is order continuous on $L^{\Phi}$ and is thus strongly consistent everywhere on $L^\Phi$.
\end{corollary}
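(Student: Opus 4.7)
The plan is to derive order continuity of $\rho_\phi$ on $L^\Phi$ directly from its quantile representation via a dominated-convergence argument on $(0,1)$, and then invoke Theorem~\ref{theorem2} to obtain strong consistency.

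Suppose $(X_n)\subset L^\Phi$ satisfies $X_n\xrightarrow{o}X$ in $L^\Phi$, i.e., $X_n\to X$ a.s.\ and $\abs{X_n}\leq Y$ a.s.\ for some $0\leq Y\in L^\Phi$. Almost sure convergence implies convergence in distribution, so $F_{X_n}^{-1}(t)\to F_X^{-1}(t)$ at every continuity point of $F_X^{-1}$, hence for a.e.\ $t\in(0,1)$. The sandwich $-Y\leq X_n\leq Y$ translates to $-F_Y^{-1}(1-t)\leq F_{X_n}^{-1}(t)\leq F_Y^{-1}(t)$, yielding the pointwise bound
\[
\phi(t)\bigabs{F_{X_n}^{-1}(t)}\;\leq\;\phi(t)F_Y^{-1}(t)+\phi(t)F_Y^{-1}(1-t).
\]
The first summand integrates to $\rho_\phi(Y)<\infty$ by hypothesis, and the change of variables $s=1-t$ identifies the integral of the second summand with $-\rho_\phi(-Y)$, which is finite because $-Y\in L^\Phi$ and $\rho_\phi$ is real-valued there. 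The dominated convergence theorem then gives $\rho_\phi(X_n)\to\rho_\phi(X)$, proving order continuity. Since $\rho_\phi$ is law-invariant by construction, Theorem~\ref{theorem2} delivers strong consistency of $\rho_\phi$ everywhere on $L^\Phi$.

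The main obstacle is securing integrability of the dominating integrand near the endpoints $t=0,1$, where $\phi$ may be unbounded (it is only nondecreasing with $\int_0^1\phi=1$) and where $F_Y^{-1}$ may blow up at either end. This is resolved precisely by symmetrizing the hypothesis: applying real-valuedness of $\rho_\phi$ to $Y$ controls behavior near $t=1$, while applying it to $-Y$ controls behavior near $t=0$. No additional structural input on $\Phi$ or on $\phi$ (such as a $\Delta_2$ condition, or boundedness of $\phi$) is needed.
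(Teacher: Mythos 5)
Your proof is correct and follows essentially the same route as the paper's: both derive order continuity by passing to quantile functions, using a.e.\ convergence of $F_{X_n}^{-1}$ and dominating $\phi F_{X_n}^{-1}$ via the sandwich $-Y\leq X_n\leq Y$, with integrability of the dominating bound supplied by real-valuedness of $\rho_\phi$ at $Y$ and at $-Y$, then invoking Theorem~\ref{theorem2}. The only cosmetic difference is that you package the two-sided bound $F_{-Y}^{-1}\phi\leq F_{X_n}^{-1}\phi\leq F_Y^{-1}\phi$ into a single dominating function $\phi(t)\bigl(F_Y^{-1}(t)+F_Y^{-1}(1-t)\bigr)$ via the reflection identity $F_{-Y}^{-1}(t)=-F_Y^{-1}(1-t)$, which is equivalent to the paper's argument.
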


\begin{proof}
Suppose that  $X_n \xrightarrow{{a.s}} X$ and there exists $Y \in L^\Phi$ such that $\abs{X_n}\leq Y$ for every $n\in\N$. It is well-known that  $F_{X_n}^{-1} \xrightarrow{{a.s.}} F_X^{-1} $ on $(0,1)$ (with the Lebesgue measure). Hence, $$F_{X_n}^{-1}\phi \xrightarrow{{a.s.}} F_X^{-1}\phi.$$
Next,   note that since $-Y\leq X_n\leq Y$, $F_{-Y}^{-1}\leq F_{X_n}^{-1}\leq F_{Y}^{-1}$ on $(0,1)$. Thus 
$$F_{-Y}^{-1}\phi\leq F_{X_n}^{-1}\leq F_{Y}^{-1}\phi\quad\text{for every }n\in\N.$$
Since $\rho$ is real-valued on $L^\Phi$, $F_{Y}^{-1}\phi\in L^1$ and $F_{-Y}^{-1}\phi\in L^1$. Thus, by the dominated convergence theorem, $$\rho_{\phi}(X_n)=\int_{0}^1 \phi(t)F^{-1}_{X_n}(t)\,\mathrm{d}t \rightarrow\int_{0}^1 \phi(t)F^{-1}_X(t)\,\mathrm{d}t= \rho_{\phi}(X).$$
This proves that $\rho_\phi$ is order continuous on $L^\phi$. 
The strongly consistency   follows from   Theorem \ref{theorem2}.
\end{proof}

We end this note with the following remark that improves the implication (b)$\implies $(a) in \cite[Theorem 2.8]{KSZ14} due to our Theorem \ref{theorem1}.

\begin{corollary}
Suppose that $\Phi$ satisfies the $\Delta_2$-condition. Let $\rho$ be any   convex, law-invariant, order bounded above functional on $L^\Phi$. Then $\cR_\rho$ is continuous on $\cM(L^\Phi)$ for the $\Phi$-weak topology.
\end{corollary}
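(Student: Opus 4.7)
The plan is to chain together Theorem \ref{theorem1}, the standard consequence of the $\Delta_2$-condition, and Lemma \ref{lem2}. First, since $L^\Phi$ is a Banach lattice (and therefore a Fr\'echet lattice) under its Luxemburg norm, Theorem \ref{theorem1} applies directly: a convex, order bounded above functional $\rho:L^\Phi\to\R$ is $\norm{\cdot}_\Phi$-continuous. Law invariance plays no role here and is simply carried along for use in the final step.

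Next, I would invoke the classical fact that when $\Phi$ satisfies the $\Delta_2$-condition, $L^\Phi$ coincides with its Orlicz heart $H^\Phi$ and with the Young class $Y^\Phi$, and its Luxemburg norm is order continuous. In particular, if $X_n\xrightarrow{o}X$ in $L^\Phi$, meaning $X_n\xrightarrow{a.s.}X$ with $\abs{X_n}\leq Y$ a.s.\ for some $Y\in L^\Phi$, then the dominated convergence theorem applied to $\Phi(\lambda\abs{X_n-X})$ (using $\Phi(\lambda\abs{X_n-X})\leq \Phi(2\lambda Y)\in L^1$ for every $\lambda>0$, which holds by $\Delta_2$) gives $\norm{X_n-X}_\Phi\to 0$. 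Combined with the norm continuity from the previous step, this yields $\rho(X_n)\to\rho(X)$, i.e., $\rho$ is order continuous on $L^\Phi=Y^\Phi$.

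Finally, with $\rho$ order continuous and law invariant on $Y^\Phi$, Lemma \ref{lem2} immediately gives that $\mathcal{R}_\rho$ is continuous on $\mathcal{M}(Y^\Phi)=\mathcal{M}(L^\Phi)$ for the $\Phi$-weak topology, which is exactly the conclusion. There is essentially no obstacle in this argument: all the real work was done in Theorem \ref{theorem1} and Lemma \ref{lem2}, and the role of $\Delta_2$ is simply to upgrade norm continuity to order continuity by collapsing the distinction between $L^\Phi$, $H^\Phi$, and $Y^\Phi$ and ensuring the norm is order continuous. The only point that warrants care is to explicitly note $L^\Phi = Y^\Phi$ so that Lemma \ref{lem2}, as stated for $Y^\Phi$, may be applied.
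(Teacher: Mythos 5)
Your proposal is correct and follows essentially the same route as the paper: Theorem \ref{theorem1} gives norm continuity, the $\Delta_2$-condition upgrades this to order continuity (since order convergence implies norm convergence) and collapses $L^\Phi=Y^\Phi$, and Lemma \ref{lem2} concludes. The only difference is that you spell out the dominated-convergence justification for why order convergence implies norm convergence under $\Delta_2$, which the paper simply asserts.
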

 
\begin{proof}
By Theorem \ref{theorem1}, $\rho$ is norm continuous. When $\Phi$ satisfies the $\Delta_2$-condition, order convergence in $L^\Phi$ implies norm convergence. Thus $\rho$ is also order continuous. Under the $\Delta_2$-condition, we also have $H^\Phi=Y^\Phi=L^\Phi$. Now apply Lemma \ref{lem2}.
\end{proof}

{\footnotesize

}

\end{document}